\def\@xfootnote[#1]{%
  \protected@xdef\@thefnmark{#1}%
  \@footnotemark\@footnotetext}
\newtheorem{lemma}{Lemma}
\newcommand{\new}{\text{\textit{\textbf{new}}}}
\newcommand{\rop}{\mathcal{R}}
\newcommand{\ffdeg}{\mathrm{Deg}}
\newcommand{\forests}{\mathfrak{F}}
\newcommand{\infragr}{\mathfrak{I}}
\newcommand{\forestsmax}{\mathfrak{F}_{\text{max}}}
\newcommand{\iclos}{\mathrm{IClos}}
\newcommand{\electrons}{e}
\newcommand{\sechains}{\mathrm{SE}}
\newcommand{\csat}{C_{\text{sat}}}
\newcommand{\cbig}{C_{\text{big}}}
\newcommand{\cadd}{C_{\text{add}}}
\renewcommand{\thesection}{\Roman{section}}
\renewcommand{\p@subsection}{\thesection.}
\begin{document}

\begin{center}\LARGE{\textbf{New method of computing the contributions of graphs without
lepton loops to the electron anomalous magnetic moment in QED}}
\end{center}
\begin{center}
\large{Sergey Volkov\footnote{E-mail: \texttt{volkoff\underline{
}sergey@mail.ru}}}
\\ \emph{\small{SINP MSU, Moscow, Russia}}
\end{center}

\small{This paper presents a new method of numerical computation of
the mass-independent QED contributions to the electron anomalous
magnetic moment which arise from Feynman graphs without closed
electron loops. The method is based on a forest-like subtraction
formula that removes all ultraviolet and infrared divergences in
each Feynman graph before integration in Feynman-parametric space.
The integration is performed by an importance sampling Monte-Carlo
algorithm with the probability density function that is constructed
for each Feynman graph individually. The method is fully automated
at any order of the perturbation series. The results of applying the
method to 2-loop, 3-loop, 4-loop Feynman graphs, and to some
individual 5-loop graphs are presented, as well as the comparison of
this method with other ones with respect to Monte Carlo convergence
speed.} \normalsize

\section{INTRODUCTION}

The electron anomalous magnetic moment (AMM) is known with a very
high precision. In the experiment~\cite{experiment} the value
$$
a_e=0.00115965218073(28)
$$
was obtained. So, an extremely high accuracy is needed also from
theoretical predictions.

The most precise prediction of electron's AMM at the present time
~\cite{kinoshita_10_new} has the following representation:
$$
a_e=a_e(\text{QED})+a_e(\text{hadronic})+a_e(\text{electroweak}),
$$
$$
a_e(\text{QED})=\sum_{n\geq 1} \left(\frac{\alpha}{\pi}\right)^n
a_e^{2n},
$$
$$
a_e^{2n}=A_1^{(2n)}+A_2^{(2n)}(m_e/m_{\mu})+A_2^{(2n)}(m_e/m_{\tau})+A_3^{(2n)}(m_e/m_{\mu},m_e/m_{\tau}),
$$
where $m_e$, $m_{\mu}$, $m_{\tau}$ are masses of electron, muon, and
tau lepton, respectively. The corresponding numerical value \
\begin{equation}\label{eq_kinoshita_amm}
a_e=0.001159652181643(25)(23)(16)(763)
\end{equation}
was obtained by using the fine structure constant
$\alpha^{-1}=137.035999049(90)$ that had been measured in the recent
experiments with rubidium atoms (see ~\cite{rubidium,codata}). Here,
the first, second, third, and fourth uncertainties come from
$A_1^{(8)}$, $A_1^{(10)}$,
$a_e(\text{hadronic})+a_e(\text{electroweak})$ and the
fine-structure constant\footnote{So, the calculated coefficients are
used for improving the accuracy of $\alpha$.} respectively. Thus, a
still relevant problem is to compute $A_1^{(2n)}$ with a maximum
possible accuracy. The values
$$
A^{(2)}_1=0.5,
$$
$$
A_1^{(4)}=-0.328478965579193\ldots,
$$
$$
A_1^{(6)}=1.181241456\ldots
$$
are known from the analytical results in
~\cite{schwinger1,schwinger2}, ~\cite{analyt2_p,analyt2_z},
~\cite{analyt3}, respectively\footnote{The value for $A_1^{(6)}$ was
a product of efforts of many scientists. See, for example,
~\cite{analyt_mi,analyt_b2,analyt_b3,analyt_b1,analyt_b4,
analyt_b,analyt_e,analyt_d,analyt_c,analyt_ll,analyt_f}.}. The
values
$$
A_1^{(8)}=-1.91298(84),\quad A_1^{(10)}=7.795(336).
$$
were presented by T.Kinoshita et al. in ~\cite{kinoshita_10_new}.
The first one was recently confirmed and improved by S.Laporta using
semi-analytical computation ~\cite{laporta_8}:
$$
A_1^{(8)}=-1.9122457649\ldots.
$$
Thus, the precision of (\ref{eq_kinoshita_amm}) can be slightly
improved. At the present time, there are no independent calculations
of $A_1^{(10)}$.

This paper presents a method of computing the contribution of
Feynman graphs without lepton loops to $A_1^{(2n)}$. We denote this
contribution by $A_1^{(2n)}[\text{no lepton loops}]$. The method
consists of two parts: the subtraction procedure for removal of UV
and IR divergences in Feynman-parametric space before integration
and the graph-specific importance sampling Monte Carlo integration.

The subtraction procedure was presented in ~\cite{volkov_2015}. It
is briefly described in Section \ref{sec_integrands}. This procedure
eliminates IR and UV divergences in each AMM Feynman graph
point-by-point, before integration, in the spirit of papers
~\cite{kinoshita_10_new, levinewright, carrollyao, carroll,
kinoshita_6, kompaniets, bogolubovparasuk, hepp, zavialovstepanov,
scherbina, zavialov, smirnov} etc. This property is substantial for
many-loop calculations when reducing an amount of the needed
computer resources is of critical importance. Let us remark that
$A_1^{(2n)}$ is free from infrared divergences since they are
removed by the on-shell renormalization as well as the ultraviolet
ones (see a more detailed explanation in ~\cite{volkov_2015}).
However, the standard subtractive on-shell renormalization can't
remove IR divergences in Feynman-parametric space before integration
as well as it does for UV divergences\footnote{Moreover, it can
generate additional IR-divergences, see a more detailed explanation
in ~\cite{volkov_2015}.}. The structure of IR and UV divergences in
individual Feynman graphs is quite complicated. IR and UV
divergences can be, in a certain sense, entangled\footnote{If $G'$
is a vertex-like (see section \ref{subsec_preliminary}) subgraph of
a graph $G$, this subgraph contains the vertex that is incident to
the external photon line of $G$, and the electron path connecting
the external electron lines of $G$ passes through $G'$, then the
Feynman amplitude of $G'$ is \textquotedblleft
enhanced\textquotedblright\ by an IR divergent multiplier, see
~\cite{weinberg,yennie}. However, if the Feynman amplitude of $G'$
had already been UV-divergent, then we can observe an
\textquotedblleft entanglement\textquotedblright\ of UV and IR
divergences. For example, see the expressions for 2-loop
renormalization constants from ~\cite{entangle} that were obtained
using dimensional regularization
 to control UV divergences and a photon mass $\lambda$ to control IR divergences:
 that expressions contain terms like $\ln (\lambda/m)/\epsilon$ together with
 terms like $1/\epsilon^2$ and $\ln^2(\lambda/m)$, where $\epsilon$ is the
 parameter of dimensional regularization.
 These terms remain after summing all 2-loop Feynman graphs.}
 with each other. Therefore, a special procedure is required for
removing both UV and IR divergences. Let us recapitulate the
advantages of the developed subtraction procedure.
\begin{enumerate}
\item It is fully automated for any $n$.
\item It is comparatively easy for realization on computers.
\item It can be represented as a forest-like formula. This formula differs from
the forest formula of Zavialov and Stepanov~\cite{zavialovstepanov},
Scherbina~\cite{scherbina}, and Zimmermann~\cite{zimmerman} only in
the choice of linear operators and in the way of combining them.
\item The contribution of each Feynman graph to $A_1^{(2n)}$ can be represented as a single
Feynman-parametric integral. The value of $A_1^{(2n)}$ is the sum of
these contributions.
\item Feynman parameters can be used directly, without any
additional tricks.
\end{enumerate}
See a detailed description in ~\cite{volkov_2015}. The subtraction
procedure was checked independently by F. Rappl using Monte Carlo
integration based on Markov chains ~\cite{rappl}.

All Feynman-parametric integrals are finite after applying the
subtraction procedure. However, the integrands remain badly-behaved:
they have a steep landscape, peaks and integrable singularities.
This fact makes it difficult to integrate with a high accuracy when
the number of dimensions is large (for example, for $n=5$ we have
$13$ dimensions or even more). The known universal integration
routines can solve this problem only partially and often not
satisfactorily. However, the simplicity of the subtraction procedure
makes it possible to understand the behaviour of integrands and to
develop an importance sampling Monte Carlo algorithm based on this
known behaviour. The algorithm for the integrands corresponding to
AMM Feynman graphs without lepton loops is presented in Section
\ref{sec_carlo}. This algorithm is based on the ideas that were used
by different scientists for proving UV-finiteness of renormalized
Feynman amplitudes ~\cite{speer,azp}. The probability density
function is constructed for each Feynman graph
individually\footnote{but fully automatically}. For constructing the
probability density function we use the ultraviolet degrees of
divergence of the so-called \emph{I-closures} of sets of graph
internal lines. The notion of I-closure is first introduced in this
paper. It was observed that the behaviour of on-shell
Feynman-parametric integrands is well approximated using I-closures.
The developed importance sampling integration can be combined with
splitting-based adaptive algorithms. A variant of an adaptive
algorithm is provided (see Section \ref{subsec_technical}). The new
integration algorithm has the following advantages:
\begin{itemize}
\item fast convergence
\item reliable error estimation on early stages of calculation
\item relatively small part of samples requires increased arithmetic
precision for preventing round-off errors (compared to the method
from ~\cite{volkov_2015}, for example)
\end{itemize}
The techniques for stabilizing and for preventing error
underestimation are presented (Section \ref{subsec_stabil}). This
Monte Carlo algorithm can also be used for integrating other
Feynman-parametric integrals provided that we have the needed
information about the integrand behaviour.

Numerical calculation results that were obtained on a personal
computer are presented in Section \ref{sec_results}:
$A_1^{(2n)}[\text{no lepton loops}]$ for $n=2,3,4$, the
contrubutions of the ladder graphs and the fully crossed ladder
graphs up to 5 loops. Each value is given with the error estimation
and with the number of Monte Carlo samples. The comparison of this
results with known ones with respect to values (Section
\ref{subsec_results}) and Monte Carlo convergence speed (Section
\ref{subsec_speed}) is presented. The results for the 4-loop and
5-loop fully crossed ladder graphs are new.

\section{CONSTRUCTION OF THE INTEGRANDS}

\subsection{Preliminary remarks}\label{subsec_preliminary}

We will work in the system of units, in which $\hbar=c=1$, the
factors of $4\pi$ appear in the fine-structure constant:
$\alpha=e^2/(4\pi)$, the tensor $g_{\mu\nu}$ is defined by
$$
g_{\mu\nu}=g^{\mu\nu}=\left(\begin{matrix}1 & 0 & 0 & 0 \\ 0 & -1 &
0 & 0 \\ 0 & 0 & -1 & 0 \\ 0 & 0 & 0 & -1 \end{matrix}\right),
$$
the Dirac gamma-matrices satisfy the following condition
$\gamma^{\mu}\gamma^{\nu}+\gamma^{\nu}\gamma^{\mu}=2g^{\mu\nu}$.

We will use Feynman graphs with the propagators
\begin{equation}\label{eq_electron_propagator}
\frac{i(\hat{p}+m)}{p^2-m^2+i\varepsilon}
\end{equation} for
electron lines and
\begin{equation}\label{eq_feynman_gauge}
\frac{-g_{\mu\nu}}{p^2+i\varepsilon}
\end{equation}
for photon lines. It is always presupposed that a Feynman graph is
strongly connected and doesn't have electron loops with odd number
of lines.

The number $\omega(G)=4-N_{\gamma}-\frac{3}{2}N_e$ is called the
\emph{ultraviolet degree of divergence} of the graph $G$. Here,
$N_{\gamma}$ is the number of external photon lines of $G$, $N_e$ is
the number of external electron lines of $G$.

If for some subgraph\footnote{In this paper we take into account
only such subgraphs that are strongly connected and contain all
lines that join the vertexes of the given subgraph.} $G'$ of the
graph $G$ the condition $\omega(G')\geq 0$ is satisfied, then
UV-divergence can appear. A graph $G'$ is called UV-divergent if
$\omega(G')\geq 0$. There are the following types of UV-divergent
subgraphs in QED Feynman graphs: \emph{electron self-energy
subgraphs} ($N_e=2,N_{\gamma}=0$), \emph{vertex-like} subgraphs
($N_e=2,N_{\gamma}=1$), \emph{photon self-energy subgraphs}
($N_e=0,N_{\gamma}=2$), \emph{photon-photon scattering
subgraphs}\footnote{The divergences of this type vanish in the sum
of all Feynman graphs, but they can arise in individual graphs.}
($N_e=0,N_{\gamma}=4$).

\subsection{The subtraction procedure for calculating $A_1^{(2n)}$}

The definitions in this section repeat the ones given in
~\cite{volkov_2015}.

Two subgraphs are said to overlap if they are not contained one
inside the other, and their sets of lines have a non-empty
intersection.

A set of subgraphs of a graph is called a \emph{forest} if any two
elements of this set don't overlap.

For a vertex-like graph $G$ by $\forests[G]$ we denote the set of
all forests $F$ consisting of UV-divergent subgraphs of $G$ and
satisfying the condition $G\in F$. By $\infragr[G]$ we denote the
set of all vertex-like subgraphs $G'$ of $G$ such that $G'$ contains
the vertex that is incident\footnote{We say that a line $l$ and a
vertex $v$ are \emph{incident} if $v$ is one of the endpoints of
$l$.} to the external photon line of $G$.\footnote{In particular,
$G\in \infragr[G]$.}

Let us define the following linear operators that are applied to the
Feynman amplitudes of UV-divergent subgraphs:
\begin{enumerate}
\item $A$ --- the projector of AMM. This
operator is applied to the Feynman amplitudes of vertex-like
subgraphs. Let $\Gamma_{\mu}(p,q)$ be the Feynman amplitude
respective to an electron of initial and final four-momenta $p-q/2$,
$p+q/2$. The Feynman amplitude $\Gamma_{\mu}$ can be expressed in
terms of three form-factors:
$$
\overline{u}_2 \Gamma_{\mu}(p,q) u_1 = \overline{u}_2 \left(
f(q^2)\gamma_{\mu} -\frac{1}{2m}g(q^2)\sigma_{\mu\nu}q^{\nu} +
h(q^2)q_{\mu} \right) u_1,
$$
where $(p-q/2)^2=(p+q/2)^2=m^2$,
$(\hat{p}-\hat{q}/2-m)u_1=\overline{u}_2(\hat{p}+\hat{q}/2-m)=0$,
$$
\sigma_{\mu\nu}=\frac{1}{2}(\gamma_{\mu}\gamma_{\nu}-\gamma_{\nu}\gamma_{\mu}),
$$
see, for example, \cite{ll4}. By definition, put
\begin{equation}\label{eq_a_def}
A \Gamma_{\mu} = \gamma_{\mu}\cdot \lim_{q^2\rightarrow 0} g(q^2).
\end{equation}
\item The definition of the operator $U$ depends on the type of
UV-divergent subgraph to which the operator is applied:
\begin{itemize}
\item If $\Pi$ is the Feynman amplitude corresponding to a photon
self-energy subgraph or a photon-photon scattering subgraph, then,
by definition, $U\Pi$ is the Taylor expansion of $\Pi$ around zero
momenta up to the UV divergence degree of this subgraph.
\item If $\Sigma(p)$ is the Feynman amplitude that corresponds to an
electron self-energy subgraph,
\begin{equation}\label{eq_sigma_general}
\Sigma(p)=a(p^2)+b(p^2)\hat{p},
\end{equation}
then, by definition\footnote{Note that it differs from the standard
on-shell renormalization.},
$$
U\Sigma(p) = a(m^2)+b(m^2)\hat{p}.
$$
\item If $\Gamma_{\mu}(p,q)$ is the Feynman amplitude corresponding to a
vertex-like subgraph,
\begin{equation}\label{eq_gamma_general_q0}
\Gamma_{\mu}(p,0)=a(p^2)\gamma_{\mu} + b(p^2)p_{\mu} +
c(p^2)\hat{p}p_{\mu}+d(p^2)(\hat{p}\gamma_{\mu}-\gamma_{\mu}\hat{p}),
\end{equation}
then, by definition,
\begin{equation}\label{eq_u_vertex}
U\Gamma_{\mu}=a(m^2) \gamma_{\mu}.
\end{equation}
\end{itemize}
\item $L$ is the operator that is used in the standard subtractive on-shell renormalization
of vertex-like subgraphs. If $\Gamma_{\mu}(p,q)$ is the Feynman
amplitude that corresponds to a vertex-like subgraph,
$$
\Gamma_{\mu}(p,0)=a(p^2)\gamma_{\mu} + b(p^2)p_{\mu} +
c(p^2)\hat{p}p_{\mu}+d(p^2)(\hat{p}\gamma_{\mu}-\gamma_{\mu}\hat{p}),
$$
then, by definition,
\begin{equation}\label{eq_q_def}
L\Gamma_{\mu}=[a(m^2)+mb(m^2)+m^2c(m^2)]\gamma_{\mu}.
\end{equation}
\end{enumerate}

Let $f_G$ be the unrenormalized Feynman amplitude that corresponds
to a vertex-like graph $G$. By definition, put
\begin{equation}\label{eq_rop_tilde}
\tilde{f}_G=\rop^{\new}_G f_G,
\end{equation}
where
\begin{equation}\label{eq_rop}
\rop^{\new}_G=\sum_{\substack{F=\{G_1,\ldots,G_n\}\in \forests[G] \\
G'\in \infragr[G]\cap F}}(-1)^{n-1}M^{G'}_{G_1}M^{G'}_{G_2}\ldots
M^{G'}_{G_n},
\end{equation}
\begin{equation}\label{eq_operators}
M^{G'}_{G''}=\begin{cases}A_{G'},\text{ if }G'=G'', \\
U_{G''},\text{ if }G''\notin \infragr[G]\text{, or }G''\varsubsetneq
G',
\\ L_{G''},\text{ if }G''\in \infragr[G], G'\varsubsetneq G'', G''\neq
G,
\\ (L_{G''}-U_{G''}),\text{ if }G''=G, G'\neq G.\end{cases}
\end{equation}
In this notation, the subscript of an operator symbol denotes the
subgraph to which this operator is applied.

By $\check{f}_G$ we denote the coefficient before $\gamma_{\mu}$ in
$\tilde{f}_G$. The value $\check{f}_G$ is the contribution of the
graph $G$ to the AMM:
$$
a_{e,1}^{\new}=\sum_{G} \check{f}_G,
$$
where the summation goes over all vertex-like Feynman graphs.

For example, for the graph $G$ from FIG. \ref{fig_example_operators}
we have
$$
\infragr[G]=\{G,bcd\}
$$
(subgraphs are specified by enumeration of vertexes). Also, we have
two other vertex-like UV-divergent subgraphs $efg$, $fgh$, one
electron self-energy subgraph $efgh$. Thus,
$$
\tilde{f}_G =
\left[A_G(1-U_{bcd})-(L_G-U_G)A_{bcd}\right]\left(1-U_{efgh}\right)\left(1-U_{efg}-U_{fgh}\right)f_G.
$$

\begin{figure}[H]
\begin{center}
\includegraphics[scale=0.5]{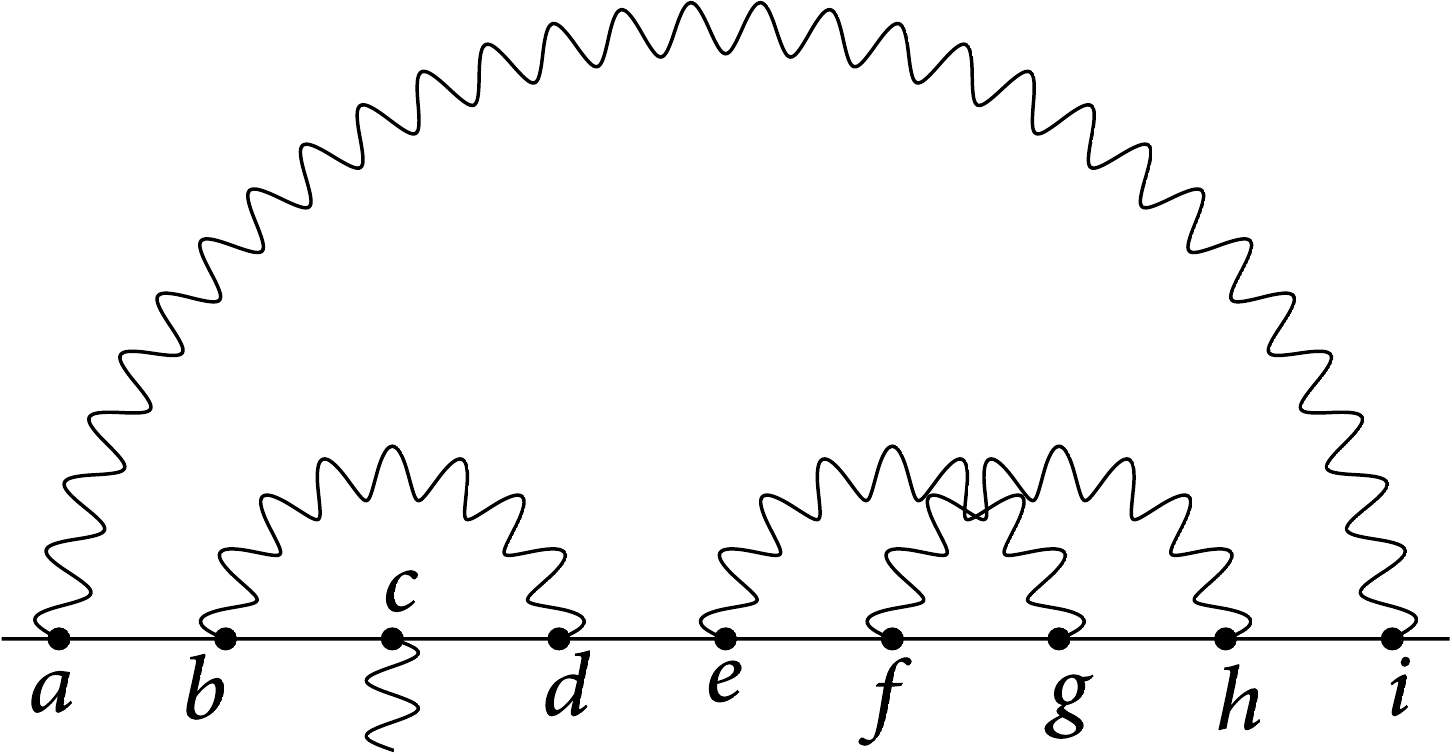}
\caption{Example of an AMM Feynman graph.}
\label{fig_example_operators}
\end{center}
\end{figure}

It is known~\cite{volkov_2015} that
$$
a_{e,1}^{\new}=a_{e,1},
$$
where
$$
a_{e,1}=\sum_{n\geq 1}\left(\frac{\alpha}{\pi}\right)^n A_1^{(2n)}.
$$
If we sum only over graphs with a fixed number of vertices, we can
obtain the corresponding term $A_1^{(2n)}$. Also, summing only over
graphs without electron loops, we obtain $A_1^{(2n)}[\text{no lepton
loops}]$ (the proof of this fact is the same as for $A_1^{(2n)}$,
but only we should restrict the set of graphs that are considered in
the proof to the gauge-invariant set of all graphs without lepton
loops).

\subsection{Integrands in Feynman-parametric
space}\label{sec_integrands}

Calculation of a graph $G$ contribution to $a_{e,1}^{\new}$ can be
reduced to the Feynman-parametric integration
\begin{equation}\label{eq_feyn_param_integral}
\int_{z_1,\ldots,z_n>0}I(z_1,\ldots,z_n)\delta(z_1+\ldots+z_n-1)
dz_1\ldots dz_n.
\end{equation}
To obtain the integrand value $I(z_1,\ldots,z_n)$ for given values
of Feynman parameters $z_1,\ldots,z_n$ we should perform the
following steps.
\begin{enumerate}
\item Using the propagators
$$
(\hat{p}+m)e^{iz_j(p^2-m^2+i\varepsilon)},\quad ig_{\mu\nu}
e^{iz_j(p^2+i\varepsilon)}
$$
instead of (\ref{eq_electron_propagator}), (\ref{eq_feynman_gauge}),
applying the subtraction procedure and performing momentum
integrations we obtain $\check{f}_G(\underline{z},\varepsilon)$,
where $\underline{z}=(z_1,z_2,\ldots,z_n)$. The momentum integration
is carried out by using explicit formulas for integrals of
multi-dimensional Gaussian functions multiplied by polynomials.
\item Put
$$
I(z_1,\ldots,z_n)=\lim_{\varepsilon\rightarrow +0} \int_0^{+\infty}
\lambda^{n-1} \check{f}_G(z_1\lambda,\ldots,z_n\lambda,\varepsilon)
d\lambda.
$$
The integration with respect to $\lambda$ is performed analytically
by using the formula
$$
\int_0^{+\infty} \lambda^{D-1}e^{\lambda(ik-\varepsilon)}d\lambda =
\frac{(D-1)!}{(\varepsilon-ik)^D}.
$$
\end{enumerate}
The integral (\ref{eq_feyn_param_integral}) is suitable for
numerical integration. A detailed description can be found in
~\cite{volkov_2015}.

\section{MONTE CARLO INTEGRATION}\label{sec_carlo}

\subsection{Importance Sampling}\label{subsec_mc_importance}

For integration of a function $f(\underline{x})=f(x_1,\ldots,x_n)$
over $\Omega$ using Monte-Carlo approach with the probability
density function $g(\underline{x})$, $\int_{\Omega}
g(\underline{x})d\underline{x}=1$, we take randomly $N$ samples
$\underline{x}_1,\ldots,\underline{x}_N$ with the distribution $g$
and approximate the needed integral by
$$
\frac{1}{N} \sum_{j=1}^N
\frac{f(\underline{x}_j)}{g(\underline{x}_j)}.
$$
The standard deviation of this value is
\begin{equation}\label{eq_sigma}
\sigma=\sqrt{\frac{V(f,g)}{N}},
\end{equation}
where
$$
V(f,g)=\int_{\Omega} \frac{f(\underline{x})^2}{g(\underline{x})}
d\underline{x} - \left(\int_{\Omega} f(\underline{x}) d\underline{x}
\right)^2,
$$
see ~\cite{mc_james}.

When the number of dimensions is large, it is very important to
choose an appropriate function $g(\underline{x})$ for obtaining
accurate results. It is desirable to have this before applying
splitting-based adaptive Monte Carlo routines. For a given function
$g(\underline{x})$ we may have one of the following three
situations.
\begin{enumerate}
\item\label{mc_case_bounded} The function $f(\underline{x})/g(\underline{x})$ is bounded. In
this case, we will have a stable Monte Carlo convergence with the
error that can be approximated by (\ref{eq_sigma}). However, the
convergence may be slow due to the big value of $V(f,g)$.
\item\label{mc_case_finite_unbounded}The function $f(\underline{x})/g(\underline{x})$ is unbounded,
but $V(f,g)$ is finite. In this case, the error can be approximated
by (\ref{eq_sigma}) too. However, the convergence can be unstable.
We should use some techniques for stabilization and adequate error
estimation.
\item\label{mc_case_infinite} $V(f,g)$ is infinite. In this case, we will have unstable
convergence that is slower than $C/\sqrt{N}$. An adequate error
estimation is difficult in this case.
\end{enumerate}
For the Feynman-parametric integrals that are considered in this
paper, the optimal realistic\footnote{It can be proved that the
optimal selection is
$g(\underline{x})=|f(\underline{x})|/(\int_{\Omega}|f(\underline{x})|d\underline{x}$).
However, it is difficult to do a stable generation of random samples
with this distribution.} selection is usually somewhere in case
\ref{mc_case_finite_unbounded}.

Selection of the function $g(\underline{x})$ needs a lot of care.
For example, let $\Omega=[0;1]^n$,
$$
f(x_1,\ldots,x_n)=a_1\ldots a_n x_1^{a_1-1}\ldots x_n^{a_n-1},
$$
\begin{equation}\label{eq_density_bb}
g(x_1,\ldots,x_n)=b_1\ldots b_n x_1^{b_1-1}\ldots x_n^{b_n-1},
\end{equation}
$a_1,\ldots,a_n,b_1,\ldots,b_n>0$. In this case,
\begin{equation}\label{eq_ab_disp}
V(f,g) = \frac{a_1^2\ldots a_n^2}{b_1\ldots b_n (2a_1-b_1)\ldots
(2a_n-b_n)}-1.
\end{equation}
On the one hand, if there exists $j$ such that $b_j>2a_j$, then we
fall into case \ref{mc_case_infinite}. On the other hand, if we take
some small value for all $b_j$, then the value (\ref{eq_ab_disp})
can be very big due to the factor $1/(b_1\ldots b_n)$ when $n$ is
large.

\subsection{Graph-specific probability density
functions}\label{subsec_pdf}

Let us consider an AMM Feynman graph $G$ containing electron and
photon lines and not containing electron loops. Suppose that the
contribution of $G$ is the integral (\ref{eq_feyn_param_integral}),
where $I(z_1,\ldots,z_n)$ is the integrand that is obtained by the
construction that is described above. Let us construct the
probability density function $g(z_1,\ldots,z_n)$ for Monte Carlo
integration. In this case, $g$ must satisfy the condition
$$
\int_{z_1,\ldots,z_n>0}g(z_1,\ldots,z_n)\delta(z_1+\ldots+z_n-1)
dz_1\ldots dz_n=1.
$$

We will use E.Speer's idea~\cite{speer} with some modifications. All
the space $\mathbb{R}^n$ is split\footnote{Let us remark that the
components has intersections on their boundaries. However, this is
inessential for integration.} into \emph{sectors}. Each sector
corresponds to a permutation $(j_1,\ldots,j_n)$ of
$\{1,2,\ldots,n\}$ and is defined by
$$
S_{j_1,\ldots,j_n}=\{(z_1,\ldots,z_n)\in\mathbb{R}:\ z_{j_1}\geq
z_{j_2}\geq\ldots\geq z_{j_n}\}.
$$
We define the function $g_0(z_1,\ldots,z_n)$ on $S_{j_1,\ldots,j_n}$
by the following relation
\begin{equation}\label{eq_mc_g0}
g_0(z_1,\ldots,z_n)=\frac{\prod_{l=2}^n
(z_{j_l}/z_{j_{l-1}})^{\ffdeg(\{j_l,j_{l+1},\ldots,j_n\})}}{z_1z_2\ldots
z_n},
\end{equation}
where $\ffdeg(s)>0$ is defined for each set $s$ of internal
lines\footnote{Note that the sets can be not connected.} of $G$
except the empty set and the set of all internal lines of $G$. The
probability density function is defined by
$$
g(z_1,\ldots,z_n) =
\frac{g_0(z_1,\ldots,z_n)}{\int_{z_1,\ldots,z_n>0}
g_0(z_1,\ldots,z_n)\delta(z_1+\ldots+z_n-1) dz_1\ldots dz_n}.
$$

The numbers $\ffdeg(\{j_l,j_{l+1},\ldots,j_n\})$, $l=2\ldots n$,
play the same role in the sector $S_{j_1,\ldots,j_n}$ as
$b_1,\ldots,b_n$ play in (\ref{eq_density_bb}). Thus, adjusting
$\ffdeg(s)$ requires a lot of care. Let us describe the procedure of
determining $\ffdeg$ for the graph $G$.

Let $s$ be a subset of the set of all internal lines of $G$. Put
$$
\omega(s)=2N_L(s)+|\electrons(s)|/2-|s|,
$$
where $|x|$ is the cardinality of a set $x$, $\electrons(s)$ is the
set of all electron lines in $s$, $N_L(s)$ is the number of
independent loops in $s$. If $s$ is the set of all internal lines of
a subgraph of $G$, then $\omega(s)$ coincides with the ultraviolet
degree of divergence of this subgraph that is defined in Section
\ref{subsec_preliminary}.

By $\iclos(s)$ we denote the set $s\cup s'$, where $s'$ is the set
of all internal photon lines $l$ in $G$ such that $s$ contains the
electron path in $G$ connecting the ends of $l$. The set $\iclos(s)$
is called the \emph{I-closure} of the set $s$. For example, if $G$
is the graph from FIG. \ref{fig_cross}, then we have
$$
\iclos(\{3,5,6\})=\{3,5,6\},
$$
$$
\iclos(\{3,4,5,6\})=\{3,4,5,6,9\},
$$
$$
\iclos(\{2,3,4,5,6,7\})=\{2,3,4,5,6,7,8,9\},
$$
$$
\iclos(\{1,2,3,4,5,6\})=\{1,2,3,4,5,6,7,8,9\}.
$$

By definition, put
$$
\omega'(s)=\omega(\iclos(s)).
$$
For example, for the graph $G$ from FIG. \ref{fig_chains}, we have
$$
\omega'(\{2,4,7,9\})=\omega(\{2,4,7,9,11,12,13,14\})=2.
$$

A graph $G''$ belonging to a forest $F\in\forests[G]$ is called a
\emph{child} of a graph $G'\in F$ in $F$ if $G''\varsubsetneq G'$,
and there is no $G'''\in F$ such that $G'''\varsubsetneq G'$,
$G''\varsubsetneq G'''$.

If $F\in\forests[G]$ and $G'\in F$ then by $G'/ F$ we denote the
graph that is obtained from $G'$ by shrinking all childs of $G'$ in
$F$ to points.

We also will use the symbols $\omega$, $\omega'$ for graphs $G'$
that are constructed from $G$ by some operations like described
above and for sets $s$ that are subsets of the set of internal lines
of the whole graph $G$. We will denote it by $\omega_{G'}(s)$ and
$\omega'_{G'}(s)$, respectively. This means that we apply the
operations $\omega$ and $\omega'$ in the graph $G'$ to the set $s'$
that is the intersection of $s$ and the set of all internal lines of
$G'$. For example, for the graph $G$ from FIG. \ref{fig_select} and
the forests $F_1=\{G,cdef,cde\}$, $F_2=\{G,cdef,def\}$, we have
$$
\omega'_{cdef/ F_1}(\{3,5,7\})=\omega'_{cdef/
F_1}(\{5,7\})=\omega_{cdef/ F_1}(\{5,7\})=1/2,
$$
$$
\omega'_{cde/ F_1}(\{3,5,7\})=\omega'_{cde/ F_1}(\{3\})=\omega_{cde/
F_1}(\{3\})=-1/2,
$$
$$
\omega'_{cdef/ F_2}(\{3,5,7\})=\omega'_{cdef/
F_2}(\{3\})=\omega_{cdef/ F_2}(\{3,6\})=1/2,
$$
$$
\omega'_{def/ F_2}(\{3,5,7\})=\omega'_{def/
F_2}(\{5,7\})=\omega_{def/ F_2}(\{5,7\})=-3/2,
$$
$$
\omega'_{G/ F_1}(\{1,2,9\})=\omega_{G/ F_1}(\{1,2,8,9\})=-1/2.
$$

Electron self-energy subgraphs and lines joining them form chains
$l_1 G_1 l_2 G_2 \ldots l_r G_r l_{r+1}$, where $l_j$ are electron
lines of $G$, $G_j$ are electron self-energy subgraphs of $G$.
Maximal (with respect to inclusion) subsets
$\{l_1,l_2,\ldots,l_{r+1}\}$ corresponding to such chains are called
\emph{SE-chains}. The set of all SE-chains of $G$ is denoted by
$\sechains[G]$. For example, for the graph $G$ from FIG.
\ref{fig_select}, \ref{fig_chains}, we have
$$
\sechains[G]=\{\{2,9\}\},
$$
$$
\sechains[G]=\{\{1,3,5\},\{6,8,10\}\}
$$
respectively. Let us remark that SE-chains never intersect, but the
corresponding chains of electron self-energy subgraphs can be nested
one inside the other.

Suppose a graph $G'$ is constructed from $G$ by operations like
described above; by definition, put
$$
\omega^*_{G'}(s)=\omega'_{G'}(s)+\frac{1}{2}\sum_{\substack{s'\in\sechains[G]
\\ s'\subseteq s,\ s'\text{ in }G'}}(|s'|-1)
$$
(it is important that here we consider the SE-chains of the whole
graph $G$). For example, for the graph $G$ from FIG.
\ref{fig_select}, for $F=\{G,cdef\}$, we have
$$
\omega^*_{G/ F}(\{1,2,6,7,9\})=\omega'_{G/
F}(\{1,2,9\})+1/2=\omega_{G/ F}(\{1,2,8,9\})+1/2=0,
$$
$$
\omega^*_{cdef/ F}(\{1,2,6,7,9\})=\omega'_{cdef/
F}(\{6,7\})=\omega(\{6,7\})=-2,
$$
$$
\omega^*_{G/ F}(\{1,9\})=\omega'_{G/ F}(\{1,9\}) = \omega_{G/
F}(\{1,9\}) = -1,
$$
for the graph $G$ from FIG. \ref{fig_chains} and for
$F=\{G,bc,de,gh,ij\}$, we have
$$
\omega^*_G(\{1,5,8,10\})=\omega'(\{1,5,8,10\})=\omega(\{1,5,8,10\})=-2,
$$
$$
\omega^*_G(\{1,3,5,8,10\})=1+\omega'(\{1,3,5,8,10\})=1+\omega(\{1,3,5,8,10\})=-3/2,
$$
$$
\omega^*_G(\{1,5,6,8,10\})=1+\omega'(\{1,5,6,8,10\})=1+\omega(\{1,5,6,8,10\})=-3/2,
$$
$$
\omega^*_G(\{1,3,5,6,8,10\})=2+\omega'(\{1,3,5,6,8,10\})=2+\omega(\{1,3,5,6,8,10\})=-1,
$$
$$
\omega^*_{G/ F}(\{1,3,5,6,8,10\})=2+\omega'_{G/
F}(\{1,3,5,6,8,10\})=2-2=0,
$$
$$
\omega^*_{bc/ F}(\{1,3,5,6,8,10\})=\omega'_{bc/
F}(\emptyset)=\omega(\emptyset)=0.
$$

By $\forestsmax[G]$ we denote the set of all maximal forests
belonging to $\forests[G]$ (with respect to inclusion). For example,
for $G$ from FIG. \ref{fig_example_operators}, \ref{fig_select},
\ref{fig_cross}, \ref{fig_chains}, we have
$$ \forestsmax[G]=\{\{G,bcd,efgh,efg\},\{G,bcd,efgh,fgh\}\}, $$
$$ \forestsmax[G]=\{\{G,cdef,cde\},\{G,cdef,def\}\}, $$
$$ \forestsmax[G]=\{\{G\}\}, $$
$$ \forestsmax[G]=\{\{G,bc,de,gh,ij\}\} $$
respectively.

Let $\csat\geq 0$, $\cbig>0$, $\cadd>-\csat$ be constants. By
definition, put
\begin{equation}\label{eq_deg}
\ffdeg(s) = \begin{cases} \cbig,\text{ if $s$ contain all electron
lines of $G$,} \\ \cadd + \max\left[ \csat,
\min_{F\in\forestsmax[G]} \sum_{G'\in F} \max(0, -\omega^*_{G'/
F}(s)) \right], \\ \quad \text{otherwise.}\end{cases}
\end{equation}
For example, for the graph $G$ from FIG. \ref{fig_select}, we have
$\forestsmax[G]=\{F_1,F_2\}$, $F_1=\{G,cdef,cde\}$,
$F_2=\{G,cdef,def\}$,
$$
\ffdeg(\{3,7,5\})=\cadd + \max\left[ \csat,\max(0,-\omega^*_{G/
F_1}(\emptyset))\right.
$$
$$ +\min\left(\max(0,-\omega^*_{cdef/ F_1}(\{7,5\})) +
\max(0,-\omega^*_{cde/ F_1}(\{3\})),\right.$$
$$\left.\left. \max(0,-\omega^*_{cdef/ F_2}(\{3\})) +
\max(0,-\omega^*_{def/ F_2}(\{7,5\}))\right)\right]
$$
$$
=\cadd+\max(\csat, \min(1/2,3/2))=\cadd+\max(\csat,1/2),
$$
$$
\ffdeg(\{2,8,9\})=\cadd+\max(\csat,3/2),\quad
\ffdeg(\{1,2,9\})=\cadd+\csat,
$$
$$
\ffdeg(\{1,2,3,4,5,6,9\})=\cbig,
$$
for the graph $G$ from FIG. \ref{fig_chains}, we have
$$
\ffdeg(\{1,3,5,6,8,10,11,12,13,14\})=\cadd+\max(\csat,0+1+1+1+1)
$$ $$=\cadd+\max(\csat,4),
$$
$$
\ffdeg(\{1,2,3,4,5,6,7,8,9,10\})=\cbig.
$$

There are certain theoretical reasons\footnote{Some of theoretical
considerations will be published in the further papers. For the
simple case when there are no UV divergent subgraphs in $G$, and
$\varepsilon>0$ is fixed in (\ref{eq_electron_propagator}) and
(\ref{eq_feynman_gauge}) and quite far from zero, we can use
$\ffdeg(s)=\lceil-\omega(s)\rceil$. However, when
$\varepsilon\rightarrow 0$, an additional divisor vanishing on some
points of the integration area boundary appears in the integrand.
This fact complicates the problem of approximating the integrand. It
was observed that I-closures can be used in this situation. Also,
the existence of divergent subgraphs intricates the problem even
more. At the present moment, there is no mathematical proof that
(\ref{eq_deg}) does not lead to case \ref{mc_case_infinite} from
Section \ref{subsec_mc_importance}.} for using (\ref{eq_deg}). For a
good Monte Carlo convergence we can use the values
\begin{equation}\label{eq_mc_constants}
\cbig=0.475,\quad \csat=0.3,\quad \cadd=0.615.
\end{equation}
These values were obtained by a series of numerical experiments on
4-loop Feynman graphs.

\begin{figure}[H]
\begin{center}
\includegraphics[scale=0.5]{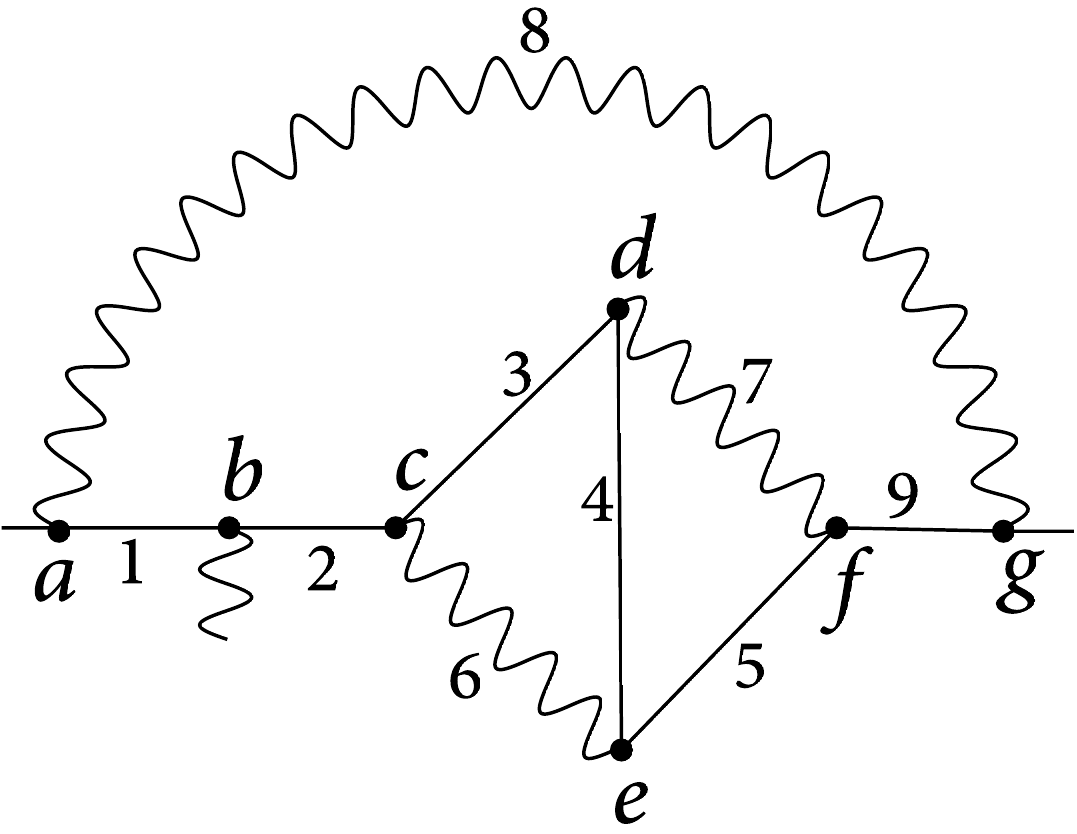}
\caption{Example with overlapping UV-divergent subgraphs.}
\label{fig_select}
\end{center}
\end{figure}

\begin{figure}[H]
\begin{center}
\includegraphics[scale=0.5]{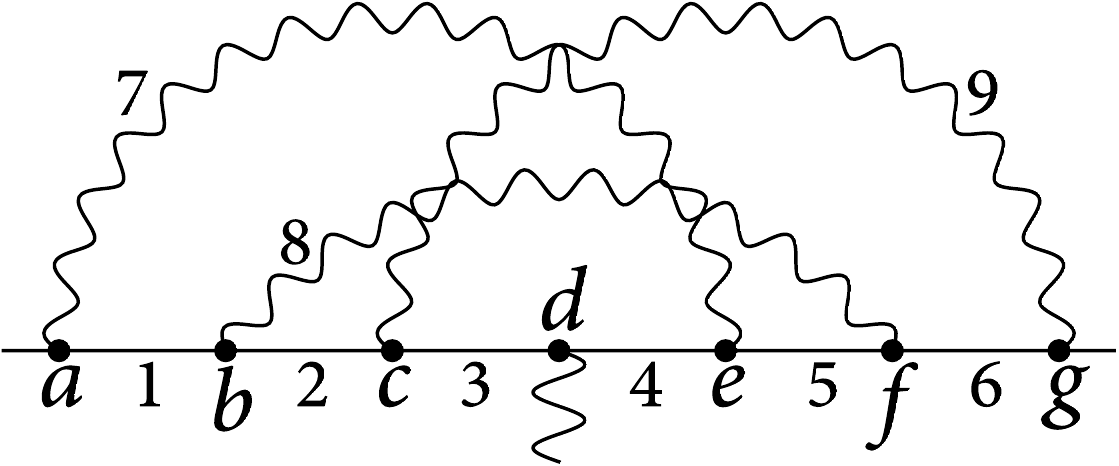}
\caption{Example, 3-loop fully crossed ladder.} \label{fig_cross}
\end{center}
\end{figure}

\begin{figure}[H]
\begin{center}
\includegraphics[scale=0.5]{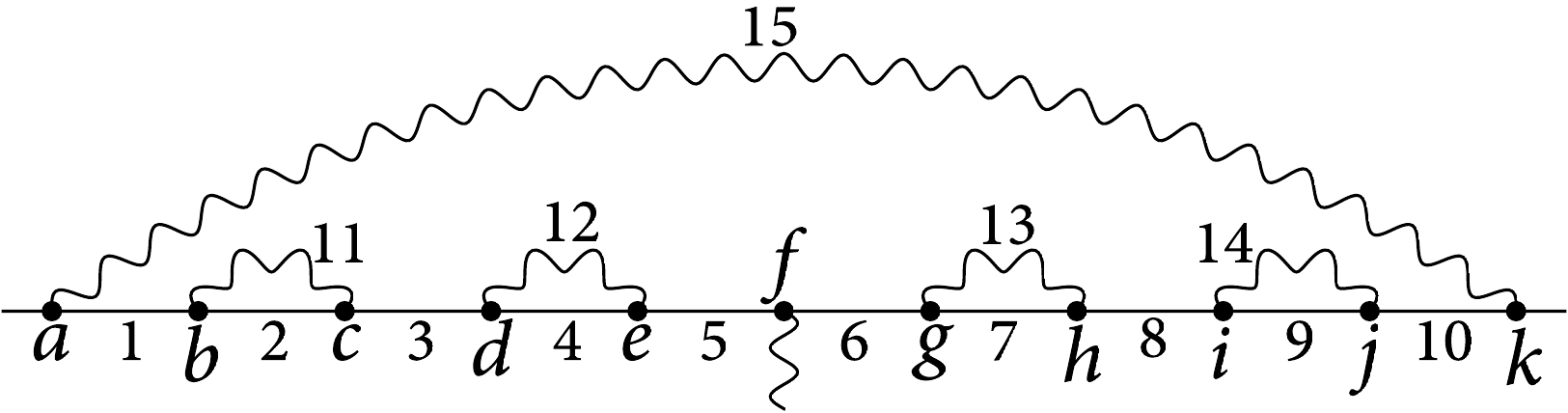}
\caption{Example with two SE-chains.} \label{fig_chains}
\end{center}
\end{figure}

\subsection{Fast sampling algorithm}

\subsubsection{Preliminaries}

Suppose the numbers $\ffdeg(s)$ are fixed for each $s\subseteq
\Lambda$, $s\neq\Lambda$, $s\neq\emptyset$, where
$\Lambda=\{1,2,\ldots,n\}$.

To generate randomly a point $(z_1,\ldots,z_n)$ it is necessary to
take two steps:
\begin{itemize}
\item generate randomly a sector $S_{j_1,\ldots,j_n}$
\item generate a point inside this sector
\end{itemize}
We generate a sector without brute forcing all sectors\footnote{In
5-loop case we have $n=14$ (see Section \ref{subsec_technical}) and
87178291200 sectors for each of 389 families of Feynman graphs.
However, what is needed is only to take $2^n=16384$ subsets for each
family.} at all stages of the calculation. We use the dynamic
programming approach instead at the initialization stage. To
generate sectors with correct probabilities it is required to know
the value
\begin{equation}\label{eq_g0_integral}
\int_{\substack{z_1,\ldots,z_n>0 \\ (z_1,\ldots,z_n)\in
S_{j_1,\ldots,j_n}}}
g_0(z_1,\ldots,z_n)\delta(z_1+\ldots+z_n-1)dz_1\ldots dz_n,
\end{equation}
where $g_0$ is defined by (\ref{eq_mc_g0}), for each sector
$S_{j_1,\ldots,j_n}$. The following lemma is used for obtaining this
integral.
\begin{lemma}\label{lemma_sum_to_max}Let $Y\subseteq \mathbb{R}^{n-1}$, $X$ be the image of
$Y$ under the map
$$
(y_2,\ldots,y_n)\rightarrow\left(\frac{y_2}{1+y_2+\ldots+y_n},\ldots,\frac{y_n}{1+y_2+\ldots+y_n}\right),
$$
$h:\mathbb{R}^n\rightarrow \mathbb{R}$ be a function satisfying
$h(k\underline{z})=h(\underline{z})/k^n$. Then
$$
\int_Y h(1,y_2,\ldots,y_n)dy_2\ldots dy_n = \int_X
h(1-x_2-\ldots-x_n,x_2,\ldots,x_n) dx_2\ldots dx_n.
$$
\end{lemma}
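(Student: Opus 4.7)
The plan is to prove this by a direct change of variables, using the given transformation
$$
(y_2,\ldots,y_n)\mapsto(x_2,\ldots,x_n),\qquad x_j=\frac{y_j}{S},\quad S:=1+y_2+\ldots+y_n,
$$
as the substitution that converts the right-hand integral into the left-hand one.

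First I would observe two key algebraic identities. Since $x_2+\ldots+x_n=(y_2+\ldots+y_n)/S=(S-1)/S$, we get $1-x_2-\ldots-x_n=1/S$. Hence $(1-x_2-\ldots-x_n,x_2,\ldots,x_n)=(1/S)(1,y_2,\ldots,y_n)$, and applying the homogeneity condition $h(k\underline{z})=h(\underline{z})/k^n$ with $k=1/S$ yields
$$
h(1-x_2-\ldots-x_n,x_2,\ldots,x_n)=S^{n}\,h(1,y_2,\ldots,y_n).
$$

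Next I would compute the Jacobian. The partial derivatives read
$$
\frac{\partial x_i}{\partial y_j}=\frac{\delta_{ij}}{S}-\frac{y_i}{S^{2}},
$$
so the Jacobian matrix has the form $S^{-1}I-S^{-2}\,y\mathbf{1}^{T}$, where $y=(y_2,\ldots,y_n)^{T}$ and $\mathbf{1}=(1,\ldots,1)^{T}$. Factoring out $S^{-1}$ and applying the matrix determinant lemma gives
$$
\det\!\left(\frac{\partial x}{\partial y}\right)=S^{-(n-1)}\bigl(1-S^{-1}\mathbf{1}^{T}y\bigr)=S^{-(n-1)}\cdot\frac{1}{S}=S^{-n}.
$$
Thus $dx_2\ldots dx_n=S^{-n}\,dy_2\ldots dy_n$, and the factor $S^{n}$ from the homogeneity identity exactly cancels this $S^{-n}$, producing the left-hand integrand $h(1,y_2,\ldots,y_n)$.

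The only mildly delicate point is verifying the Jacobian determinant, and I would present that via the matrix determinant lemma (or, equivalently, Sylvester's identity) rather than expanding minors. I would also briefly note that the map is a bijection from $Y$ to $X$ on any region where $1+y_2+\ldots+y_n$ does not vanish (which is automatic on any subset where the right-hand integral is meaningful as an image), so the change-of-variables formula applies and the equality of integrals follows immediately.
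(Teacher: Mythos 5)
Your proof is correct and takes essentially the same route as the paper: the same substitution $x_j=y_j/(1+y_2+\cdots+y_n)$, the homogeneity of $h$ to rewrite the integrand, and the evaluation of the Jacobian of an identity-plus-rank-one matrix, yielding $\det\left(\partial x/\partial y\right)=(1+y_2+\cdots+y_n)^{-n}$. The only cosmetic difference is that you invoke the matrix determinant lemma, whereas the paper expands the same determinant by multilinearity in the rows and discards the contributions with two or more rows taken from the rank-one part (since those rows are collinear).
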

\begin{proof}Let us use the substitution
$x_j=y_j/(1+y_2+\ldots+y_n)$. To apply the change of variables
theorem we should prove the following relation for the Jacobian:
\begin{equation}\label{eq_jacobian}
\left| \frac{\partial (x_2,\ldots,x_n)}{\partial (y_2,\ldots,y_n)}
\right| =
\frac{h(1,y_2,\ldots,y_n)}{h(1-x_2-\ldots-x_n,x_2,\ldots,x_n)}
\end{equation}
The right part of (\ref{eq_jacobian}) equals
$1/(1+y_2+\ldots+y_n)^n$. The left part equals $|D|$, where
$$
D=\det(M'+M''),\quad M'=\parallel m'_{ij}
\parallel,\quad M''=\parallel m''_{ij}
\parallel,\quad 2\leq i,j\leq n, $$ $$
m'_{ij}=\frac{\delta_{ij}}{1+y_2+\ldots+y_n},\quad
m''_{ij}=-\frac{y_i}{(1+y_2+\ldots+y_n)^2}.
$$
The determinant $\det(M'+M'')$ is equal to the sum of the
determinants of the matrixes that are obtained from $M'$ by changing
some rows to the corresponding rows of $M''$. By $d_l$ we denote the
contribution of the matrixes that are obtained by changing $l$ rows.
It is easy to see that $d_l=0$ for $l\geq 2$, because all rows of
$M''$ are collinear. Also, it is obvious that
$$
d_0=\det M'=\frac{1}{(1+y_2+\ldots+y_n)^{n-1}}.
$$
By simple manipulations we obtain
$$
d_1=-\frac{y_2+\ldots+y_n}{(1+y_2+\ldots+y_n)^n}.
$$
Thus,
$$
D=d_0+d_1=\frac{1}{(1+y_2+\ldots+y_n)^n}.
$$
This completes the proof.
\end{proof}
Using the proved lemma and the substitution
\begin{equation}\label{eq_subst_zy}
z_{j_l}=\frac{y_l}{1+y_2+\ldots+y_n},\quad 1\leq l\leq n,
\end{equation}
where $y_1=1$, we obtain that (\ref{eq_g0_integral}) equals
$$
\int_{1\geq y_2\geq y_3\geq\ldots\geq y_n>0} \frac{\prod_{l=2}^n
(y_l/y_{l-1})^{\ffdeg(\{j_l,j_{l+1},\ldots,j_n\})}}{y_2\ldots y_n}
dy_2\ldots dy_n.
$$
By the substitution
\begin{equation}\label{eq_subst_yt}
t_l=y_l/y_{l-1}
\end{equation}
we obtain that it equals
$$
\frac{1}{\prod_{l=2}^n \ffdeg(\{j_l,j_{l+1},\ldots,j_n\})}.
$$

For generating sector permutations element-by-element we will use
the function
$$
W(s)=\sum_{\substack{j_1,\ldots,j_{|s|}\in s \\ \text{are distinct}
}} \frac{1}{\prod_{l=1}^{|s|}
\ffdeg(\{j_l,j_{l+1},\ldots,j_{|s|}\})}
$$
that is defined on all proper subsets of $\Lambda$. The function $W$
satisfies the recurrence relations:
\begin{equation}\label{eq_dyn_prog}
W(s)=\frac{\sum_{l\in s} W(s\backslash\{l\})}{\ffdeg(s)},\
s\neq\emptyset,\quad W(\emptyset)=1.
\end{equation}
When the permutation prefix $j_1,j_2,\ldots,j_{l-1}$ has already
been generated, the probability that $j_l=a$ is equal to
$P[\Lambda\backslash\{j_1,\ldots,j_{l-1}\},a]$, where
\begin{equation}\label{eq_next_prob}
P[s,a]=\frac{W(s\backslash\{a\})}{\sum_{a'\in s}
W(s\backslash\{a'\})}.
\end{equation}

Lemma \ref{lemma_sum_to_max} is also useful for generating a point
inside the given sector. By this lemma, the generation of
$z_1,\ldots,z_n$ in $S_{j_1,\ldots,j_n}$ is equivanent to the
generation of $1\geq y_2\geq y_3\geq\ldots\geq y_n$ with the
probability density
$$
C\cdot \frac{\prod_{l=2}^n
(y_l/y_{l-1})^{\ffdeg(\{j_l,j_{l+1},\ldots,j_n\})}}{y_2\ldots y_n},
$$
where the substitution (\ref{eq_subst_zy}) is applied. Applying
(\ref{eq_subst_yt}) we obtain that the generation is equivalent to
the independent generation of $t_l$, $0\leq t_l\leq 1$, $2\leq l\leq
n$ with the probability densities
$$
C\cdot t^{\ffdeg(\{j_l,\ldots,j_n\})-1}.
$$

For calculating the probability density at a given point it is
needed to know the whole integral
$$
\int_{z_1,\ldots,z_n>0}g_0(z_1,\ldots,z_n)\delta(z_1+\ldots+z_n-1)
dz_1 \ldots dz_n.
$$
It equals
$$
\sum_{a\in\Lambda} W(\Lambda \backslash \{a\}).
$$

\subsubsection{The algorithm}

\noindent\emph{Initialization part.}
\begin{enumerate}
\item Calculate $W(s)$ for all $s\subseteq\Lambda$ using (\ref{eq_dyn_prog}).
\item Calculate $P[s,a]$ for all $s\subseteq\Lambda$, $s\neq
\emptyset$, $a\in s$ using (\ref{eq_next_prob}).
\end{enumerate}

\noindent\emph{Generation part.}
\begin{enumerate}
\item Generation of a sector.\\
\verb"for "$l$\verb" := "$1$\verb" to "$n$\verb" do" \\
\verb"  put "$j_l=a$\verb" with the probability "
$P[\Lambda\backslash\{j_1,\ldots,j_{l-1}\},a]$ \verb";"
\item Generation of a point.
\begin{itemize}
\item Generate $r_2,\ldots,r_n\in [0;1]$ using the uniform
distribution.
\item Put
$t_l=r_l^{1/\ffdeg(\{j_l,\ldots,j_n\})}$, $2\leq l\leq n$.
\item Put $y_1=1$, $y_l=t_2\ldots t_l$, $2\leq l\leq n$.
\item Calculate $z_1,\ldots,z_n$ using (\ref{eq_subst_zy}).
\end{itemize}
\end{enumerate}

\subsection{Stabilization and prevention of error
underestimating}\label{subsec_stabil}

Since $f(\underline{z})/g(\underline{z})$ may be unbounded, the
integration process can crash down at any moment of time due to an
extremely big contribution of a sample. To prevent this situation we
use the following procedures.
\begin{enumerate}
\item When we generate a value $r\in [0;1]$ with uniform
distribution, we reject all $r<1/(N_{\text{gen}}+1000)^2$, where
$N_{\text{gen}}$ is the total number of generations at this moment.
No random number will be rejected during the whole process of
integration with the probability more than $99\%$ since
$$
\sum_{N=0}^{+\infty} \frac{1}{(N+1000)^2} < 0.01.
$$
However, the rejection prevents an emergence of a very small values
of $r$ that usually don't appear in batches of this size.
\item We store the variable \verb"absbound" (that is initialized by $0$), each value
$x=f(\underline{z})/g(\underline{z})$ not satisfying $|x|\leq b$,
where $b=\max(\text{\texttt{absbound}},0.1\sigma N)$, is saturated.
After
each saturation we increase \verb"absbound": \\
\verb"  absbound := "$2\cdot b$\verb";" \\
Here $\sigma$ is the current value of the standard deviation, $N$ is
the number of samples processed. This saturation prevents from
occasional appearance of extremely big values, but allows systematic
appearance of them.
\end{enumerate}

The integration error can be estimated by\footnote{For obtaining the
proper standard deviation we should also subtract
$$
\left(\sum_{j=1}^N \frac{f(\underline{z}_j)}{g(\underline{z}_j)}
\right)^2/N^3.
$$
However, in the current version of the integration program this has
not been implemented. Let us remark that in most of cases for
multiloop Feynman-parametric integrals this correction is very
small.}
$$
\sigma_{\downarrow}^2=\frac{\sum_{j=1}^N
(f(\underline{z}_j)/g(\underline{z}_j))^2}{N^2}.
$$
However, we can get an underestimation, because of:
\begin{itemize}
\item $\sigma$ can be underestimated due to a big uncertainty of
$\sigma_{\downarrow}$ connected with a small number of samples;
\item the distribution of the sample average can be quite far from
the Gaussian normal distribution.
\end{itemize}
Here we prevent only the first type of underestimation.

By definition, put
$$
n_a=|\{1\leq j\leq N:\ 2^{a-1/2}\leq
|f(\underline{z}_j)/g(\underline{z}_j)|<2^{a+1/2}\}|,\quad
a\in\mathbb{Z}.
$$
Let $a_{\text{max}}$ be the maximal $a$ such that $n_a\geq 1$. Put
$$
\triangle_{\text{uncert}}=4\cdot\max_{a=a_{\text{max}}-9}^{a_{\text{max}}}\sqrt{n_a}4^a,
$$
$$
\triangle_{\text{peak}}=\begin{cases}4^{a_{\text{max}}+d-1},\text{
if }d \geq 2, \\ 0 \text{ otherwise,} \end{cases}
$$
where $d$ is the maximal integer number such that $0\leq d\leq 6$
and for all $a$ such that $a_{\text{max}}-d<a\leq a_{\text{max}}$ we
have $n_a\leq 2$,
$$
\sigma^2_{\uparrow}=\sigma^2_{\downarrow}+\triangle_{\text{uncert}}+\triangle_{\text{peak}}.
$$
$\sigma_{\uparrow}$ is an improved estimation of $\sigma$. Here
$\triangle_{\text{uncert}}$ corresponds to the uncertainty of the
numbers $n_a$, $\triangle_{\text{peak}}$ corresponds to hypothetical
undiscovered peaks.

If $\sigma_{\uparrow}/\sigma_{\downarrow}$ is far from $1$, then
both $\sigma_{\uparrow}$ and $\sigma_{\downarrow}$ are unreliable. A
slow convergence of $\sigma_{\uparrow}/\sigma_{\downarrow}$
indicates\footnote{These indications should be considered only as
heuristics, not as rules.} that the integral $\int_{\Omega}
\frac{f(\underline{x})^2}{g(\underline{x})}d\underline{x}$ is
\textquotedblleft near to divergent\textquotedblright\footnote{For
example, the integral $\int_0^1 x^{a-1}dx$ is \textquotedblleft near
to divergent\textquotedblright\ if $a>0$ is near to zero.}, the
divergence indicates that $V(f,g)$ is infinite.

We use $\sigma_{\uparrow}$ as an estimation of $\sigma$ in all
tables of Section \ref{sec_results}.

\section{NUMERICAL RESULTS}\label{sec_results}

\subsection{Technical remarks}\label{subsec_technical}

We have evaluated the contributions of some Feynman graphs
numerically. The aim of the computation was only the test of the
method, not an obtainment of new accurate results.

For computing $A_1^{(2n)}[\text{no lepton loops}]$ we aggregate all
corresponding Feynman graphs into families. Each family corresponds
to a self-energy graph\footnote{Unlike
~\cite{kinoshita_10_new,carroll,carrollyao}, we don't work with
self-energy graphs. Self-energy graphs play only the role of
signatures of graph families. All calculations are performed with
vertex-like graphs.}. All graphs of a family are obtained from the
corresponding self-energy graph by inserting an external photon line
into an arbitrary place. The graphs, belonging to one family, have a
lot of same construction blocks in formulas and have similar numbers
$\ffdeg(s)$. Thus, the aggregation can reduce the computer time that
is needed for the calculation. We decrease the number of integration
variables by one using the idea from ~\cite{kinoshita_rules}: for
each graph $G$ we treat the sum $z_a+z_b$ as one variable, where
$a,b$ are the electron lines that are incident to the vertex that is
incident to the external photon line\footnote{It was observed that
the integrands from Section \ref{sec_integrands} depend linearly on
$z_a$ when $z_a+z_b$ is fixed.}. This allows us to use a unified set
of integration variables for each family of graphs (each integration
variable corresponds to an internal line of the self-energy graph of
the family). For a family $M$ we use the values
$$
\ffdeg(s)=\min_{G\in M} \ffdeg_G(\{j:\ l_G(j)\in s\}),
$$
where $l_G(j)$ is the line in $G_M$ that corresponds to the line $j$
in $G$, where $G_M$ is the self-energy graph corresponding to $M$;
here by $\ffdeg_G(s)$ we denote the value $\ffdeg(s)$ that is
constructed in the graph $G$.

The values (\ref{eq_mc_constants}) were used for Monte Carlo in all
calculations. The described importance sampling method was combined
with the adaptive algorithm: the whole integration area was split
into subsets, each subset contains all sectors $S_{j_1,\ldots,j_n}$
with the fixed $(j_1,j_2)$; for each subset the probability of
selecting this subset is adjusted dynamically during the integration
to minimize $\sigma_{\uparrow}$. The value $\sigma_{\uparrow}$ was
first calculated for each subset separately, and the values were
combined after this. Before dynamical adjusting, each subset is
initialized by 50 Monte Carlo samples. The splitting improved
$\sigma$ by about $1.3\ldots 1.5$ times.

The D programming language ~\cite{dlang} was used for the generator
of the code of the integrands and for the Monte Carlo integrator.
The code of the integrands was generated in the C++ programming
language. Total size of the C++ generated code for the 4-loop
integrands is 230 MB. The corresponding size of the compiled code is
600 MB. Interval arithmetic was used for preventing round-off
errors\footnote{For more detailed explanation about the nature of
these round-off errors, see ~\cite{volkov_2015}.}:
\begin{itemize}
\item each value is represented as an interval; it is supposed that
the exact value is in the interval;
\item arithmetic operations on intervals are defined in such a way
as to preserve this property.
\end{itemize}
The value of an integrand at a point was first calculated as an
interval in the machine 64-bit precision. If the precision was not
enough, it was evaluated as an interval with the 352-bit precision.
The points, for which the 352-bit precision is not enough, are
ignored. Machine-precision and arbitrary-precision interval
arithmetic calculations were performed with the help of Branimir
Lambov's RealLib.

Two computer configurations were used for the computations. The
configuration A is 1 core of AMD Athlon(tm) II P320 2.1GHz. The
configuration B is 2 cores of Intel Xeon E5-2658A, 2.2GHz.

\subsection{Results of computations}\label{subsec_results}

Table \ref{table_total} contains the numerical results of computing
$A_1^{(2n)}[\text{no lepton loops}]$. The comparison with the known
analytical values\footnote{For $n=4$ we compare with the recent
result of S.Laporta ~\cite{laporta_8}. This result is in a good
agreement with the results from
~\cite{kinoshita_10_new,kinoshita_8_revis}.} is provided. Here,
\begin{itemize}
\item $N_l$ is the number of independent loops;
\item Val. is the computed value with the estimated error ($1\sigma$
limits);
\item An.val. is the known analytical or semi-analytical value;
\item Ref. is the references to the papers where the analytical value
is presented;
\item $N_{\text{call}}$ is the total number of calls of the integrand functions
(i.e., this is the number of Monte Carlo samples); \item
$N_{\text{prec}}$ is the number of Monte Carlo samples for which the
machine 64-bit precision was not enough (see Section
\ref{subsec_technical});
\item $\sigma_{\uparrow}/\sigma_{\downarrow}$ is the relation
between the corrected and the direct estimations of $\sigma$ (see
Section \ref{subsec_stabil});
\item comp. is the computer configuration (A or B, see Section
\ref{subsec_technical}) and the time of the computation (h=hours,
d=days).
\end{itemize}

Table \ref{table_4loop} contains the contributions of the individual
families of Feynman graphs to $A_1^{(8)}[\text{no lepton loops}]$.
The self-energy graphs corresponding to the families are shown in
FIG. \ref{fig_4loop}.

We also have evaluated the individual contributions of the ladder
graphs (FIG. \ref{fig_ladder}) and the fully crossed ladder graphs
(FIG. \ref{fig_crosses}) up to 5 loops for testing the method and
for comparing with the known values. The size of C++ generated code
is 2.4 MB for the 5-loop ladder graph and 10 MB for the 5-loop fully
crossed ladder graph. The corresponding sizes of the compiled code
are 6.2 MB and 24 MB. The contributions of the ladder graphs that
are obtained by the presented method are the same as the
contributions that are obtained by the standard subtractive on-shell
renormalization\footnote{However, the standard renormalization
doesn't lead to finite Feynman-parametric integrals, see
~\cite{volkov_2015}.}. This fact can be proved by simple algebraic
transformations, see the Section 3 of ~\cite{volkov_2015}. The fully
crossed ladder graphs don't contain divergent subgraphs. Thus, the
contributions of the fully crossed ladder graphs don't depend on the
kind of a subtraction procedure. These graphs are a direct test for
Monte Carlo integration. The results for the ladder graphs and for
the fully crossed ladder graphs are provided in Table
\ref{table_ladder} and Table \ref{table_cross} respectively. Here,
$\triangle_{\text{prec}}$ is the contribution of the points for
which the machine 64-bit precision was not enough. The results for
4-loop and 5-loop fully crossed ladder graphs are new. The
dependence of the precision of that 5-loop calculations on number of
Monte Carlo samples\footnote{Initialization samples are included in
$N_{\text{call}}$.} is shown in Tables \ref{table_ladder_nsamples},
\ref{table_cross_nsamples} (by Diff. we mean the difference between
the obtained value and the analytical one from ~\cite{caffo}).

\small \ctable[pos=H,label=table_total, caption={Numerical results
($1\sigma$ limits) for $A_1^{(2n)}{[\text{no lepton loops}]}$ and
comparison with known analytical values.}]{cccccccc}{}{\hline \hline
$N_l$ & Val. & An.val. & Ref. & $N_{\text{call}}$ &
$N_{\text{prec}}$ & $\sigma_{\uparrow}/\sigma_{\downarrow}$ & comp.
\\
\hline 2 & $-0.34416(10)$ & $-0.344167$ &
~\cite{analyt2_p,analyt2_z2} & $95\cdot 10^7$ & $10116$ & $1.002$ &
A,6.5h \\
3 & $0.9019(55)$ & $0.90437$ &
~\cite{analyt_b,analyt_e,analyt_d,analyt_c,analyt_f,analyt3} &
$43\cdot 10^7$ & $60466$ & $1.044$ & A,24h \\
4 & $-2.34(17)$ & $-2.1769$ & ~\cite{laporta_8} & $10^9$ & $68\cdot
10^4$ & $1.17$ & B,16d \\ \hline \hline } \normalsize

\ctable[pos=H,label=table_4loop,caption={Contributions ($1\sigma$
limits) of the families from FIG. \ref{fig_4loop} to
$A_1^{(8)}{[\text{no lepton loops}]}$.}]{cccccccc}{}{ \hline \hline
\# & value & $N_{\text{call}}$ & $\sigma_{\uparrow}/\sigma_{\downarrow}$& \# & value & $N_{\text{call}}$ & $\sigma_{\uparrow}/\sigma_{\downarrow}$\\
\hline $M_{01}$ & $-1.198(30)$ & $30\cdot 10^{6}$ & 1.20
 & $M_{25}$ & $-1.266(14)$ & $87\cdot 10^{5}$ & 1.30
\\
$M_{02}$ & $-1.689(40)$ & $47\cdot 10^{6}$ & 1.19
 & $M_{26}$ & $-1.598(34)$ & $36\cdot 10^{6}$ & 1.18
\\
$M_{03}$ & $-2.537(35)$ & $41\cdot 10^{6}$ & 1.15
 & $M_{27}$ & $-0.246(20)$ & $16\cdot 10^{6}$ & 1.17
\\
$M_{04}$ & $6.473(48)$ & $71\cdot 10^{6}$ & 1.14
 & $M_{28}$ & $6.049(43)$ & $58\cdot 10^{6}$ & 1.13
\\
$M_{05}$ & $4.202(17)$ & $12\cdot 10^{6}$ & 1.18
 & $M_{29}$ & $1.585(19)$ & $14\cdot 10^{6}$ & 1.20
\\
$M_{06}$ & $-0.990(26)$ & $24\cdot 10^{6}$ & 1.18
 & $M_{30}$ & $-3.057(33)$ & $37\cdot 10^{6}$ & 1.14
\\
$M_{07}$ & $-2.170(26)$ & $23\cdot 10^{6}$ & 1.25
 & $M_{31}$ & $0.8248(76)$ & $54\cdot 10^{5}$ & 1.20
\\
$M_{08}$ & $-5.282(42)$ & $56\cdot 10^{6}$ & 1.16
 & $M_{32}$ & $-0.6672(76)$ & $52\cdot 10^{5}$ & 1.22
\\
$M_{09}$ & $-1.112(27)$ & $26\cdot 10^{6}$ & 1.15
 & $M_{33}$ & $-0.7301(49)$ & $39\cdot 10^{5}$ & 1.20
\\
$M_{10}$ & $1.845(37)$ & $41\cdot 10^{6}$ & 1.15
 & $M_{34}$ & $1.084(11)$ & $74\cdot 10^{5}$ & 1.19
\\
$M_{11}$ & $3.244(31)$ & $31\cdot 10^{6}$ & 1.18
 & $M_{35}$ & $1.859(11)$ & $70\cdot 10^{5}$ & 1.20
\\
$M_{12}$ & $-3.633(36)$ & $40\cdot 10^{6}$ & 1.14
 & $M_{36}$ & $-1.619(15)$ & $10^{7}$ & 1.30
\\
$M_{13}$ & $-4.337(17)$ & $13\cdot 10^{6}$ & 1.15
 & $M_{37}$ & $0.9384(54)$ & $42\cdot 10^{5}$ & 1.22
\\
$M_{14}$ & $0.580(29)$ & $26\cdot 10^{6}$ & 1.23
 & $M_{38}$ & $-4.209(14)$ & $93\cdot 10^{5}$ & 1.23
\\
$M_{15}$ & $0.308(28)$ & $27\cdot 10^{6}$ & 1.21
 & $M_{39}$ & $-1.982(14)$ & $98\cdot 10^{5}$ & 1.32
\\
$M_{16}$ & $7.507(44)$ & $64\cdot 10^{6}$ & 1.14
 & $M_{40}$ & $1.548(24)$ & $18\cdot 10^{6}$ & 1.29
\\
$M_{17}$ & $2.895(27)$ & $26\cdot 10^{6}$ & 1.12
 & $M_{41}$ & $-1.892(19)$ & $15\cdot 10^{6}$ & 1.20
\\
$M_{18}$ & $-4.827(36)$ & $42\cdot 10^{6}$ & 1.14
 & $M_{42}$ & $2.262(23)$ & $20\cdot 10^{6}$ & 1.19
\\
$M_{19}$ & $0.4035(65)$ & $47\cdot 10^{5}$ & 1.19
 & $M_{43}$ & $-1.1308(89)$ & $57\cdot 10^{5}$ & 1.21
\\
$M_{20}$ & $2.219(13)$ & $86\cdot 10^{5}$ & 1.18
 & $M_{44}$ & $2.312(17)$ & $13\cdot 10^{6}$ & 1.22
\\
$M_{21}$ & $0.6548(55)$ & $41\cdot 10^{5}$ & 1.22
 & $M_{45}$ & $2.109(20)$ & $16\cdot 10^{6}$ & 1.19
\\
$M_{22}$ & $-1.721(16)$ & $11\cdot 10^{6}$ & 1.21
 & $M_{46}$ & $-0.049(15)$ & $10^{7}$ & 1.23
\\
$M_{23}$ & $-3.150(15)$ & $10^{7}$ & 1.19
 & $M_{47}$ & $-2.113(21)$ & $17\cdot 10^{6}$ & 1.19
\\
$M_{24}$ & $-0.035(19)$ & $15\cdot 10^{6}$ & 1.21
 & \  & \  & \  & \
\\
\hline \hline }

\small \ctable[pos=H,label=table_ladder,caption={Numerical results
($1\sigma$ limits) for the ladder Feynman graphs up to 5 loops and
comparison with known analytical values.}]{ccccccccc}{}{\hline
\hline $N_{\text{l}}$ & Val. & An.val. & Ref. & $N_{\text{call}}$ &
$N_{\text{prec}}$ & $\triangle_{\text{prec}}$ &
$\sigma_{\uparrow}/\sigma_{\downarrow}$ & comp.
\\
\hline 2 & $0.777440(67)$ & $0.777478$ & ~\cite{analyt2_p,caffo} &
$62\cdot 10^{7}$ & $3160$ & $0.00005$ & $1.001$ & A,3h
\\
3 & $1.79052(37)$ & $1.790278$ & ~\cite{analyt_d,caffo} & $68\cdot
10^{7}$ & $15664$ & $0.0011$ & $1.01$ & A,11h
\\
4 & $4.3035(39)$ & $4.29765$ & ~\cite{caffo} & $39\cdot 10^{7}$ &
$18997$ & $0.018$ & $1.08$ & A,24h
\\
5 & $11.681(46)$ & $11.6592$ & ~\cite{caffo} & $4\cdot 10^{8}$ &
$30652$ & $0.25$ & $1.32$ & A,7.5d
\\
\hline \hline } \normalsize

\small \ctable[pos=H,label=table_cross,caption={Numerical results
($1\sigma$ limits) for the fully crossed ladder Feynman graphs up to
5 loops and comparison with known analytical
values.}]{ccccccccc}{}{\hline \hline $N_{\text{l}}$ & Val. & An.val.
& Ref. & $N_{\text{call}}$ & $N_{\text{prec}}$ &
$\triangle_{\text{prec}}$ & $\sigma_{\uparrow}/\sigma_{\downarrow}$
& comp.
\\
\hline 2 & $-0.467666(49)$ & $-0.467645$ & ~\cite{analyt2_p} &
$61\cdot 10^{7}$ & $169$ & $-4\cdot 10^{-7}$ & $1.003$ & A,2.5h
\\
3 & $-0.026810(47)$ & $-0.026800$ & ~\cite{analyt3} & $56\cdot
10^{7}$ & $16370$ & $-5\cdot 10^{-5}$ & $1.008$ & A,10h
\\
4 & $0.29685(21)$ & - & - & $14\cdot 10^{7}$ & $25009$ & $0.0014$ &
$1.057$ & A,26h
\\
5 & $-0.6427(21)$ & - & - & $25\cdot 10^{6}$ & $14033$ & $-0.018$ &
$1.228$ & A,2.5d
\\
\hline \hline} \normalsize

\small \ctable[pos=H,label=table_ladder_nsamples,caption={Dependence
of the 5-loop ladder graph precision on
$N_{\text{call}}$.}]{cccccc}{}{\hline \hline $N_{\text{call}}$ &
Val. & $\sigma_{\uparrow}$ & $\sigma_{\downarrow}$ & Diff. &
$\sigma_{\uparrow}/\sigma_{\downarrow}$
\\
\hline $10^4$ & $5.02$ & $37.4$ & $1.85$ & $-6.64$ & $20.2$
\\
$4\cdot 10^4$ & $5.79$ & $2.21$ & $1.2$ & $-5.87$ & $1.84$
\\
$25\cdot 10^4$ & $9.2$ & $1.28$ & $0.7$ & $-2.45$ & $1.82$
\\
$10^6$ & $10.42$ & $0.64$ & $0.42$ & $-1.24$ & $1.51$
\\
$4\cdot 10^6$ & $10.67$ & $0.33$ & $0.23$ & $-0.98$ & $1.45$
\\
$25\cdot 10^6$ & $11.236$ & $0.167$ & $0.112$ & $-0.424$ & $1.49$
\\
$10^8$ & $11.621$ & $0.089$ & $0.064$ & $-0.038$ & $1.38$
\\
$4\cdot 10^8$ & $11.6816$ & $0.046$ & $0.0349$ & $0.0224$ & $1.32$
\\
\hline \hline } \normalsize

\small \ctable[pos=H,label=table_cross_nsamples,caption={Dependence
of the 5-loop fully crossed ladder graph $\sigma$ on
$N_{\text{call}}$.}]{ccccc}{}{\hline \hline $N_{\text{call}}$ & Val.
& $\sigma_{\uparrow}$ & $\sigma_{\downarrow}$ &
$\sigma_{\uparrow}/\sigma_{\downarrow}$
\\
\hline $10^4$ & $-0.664$ & $0.267$ & $0.083$ & $3.2$
\\
$4\cdot 10^4$ & $-0.619$ & $0.054$ & $0.027$ & $2$
\\
$25\cdot 10^4$ & $-0.6146$ & $0.02$ & $0.0124$ & $1.609$
\\
$10^6$ & $-0.6318$ & $0.0105$ & $0.0072$ & $1.462$
\\
$4\cdot 10^6$ & $-0.6369$ & $0.00539$ & $0.00397$ & $1.358$
\\
$25\cdot 10^6$ & $-0.6427$ & $0.00213$ & $0.00173$ & $1.228$
\\
\hline \hline } \normalsize

\begin{figure}[H]
\includegraphics{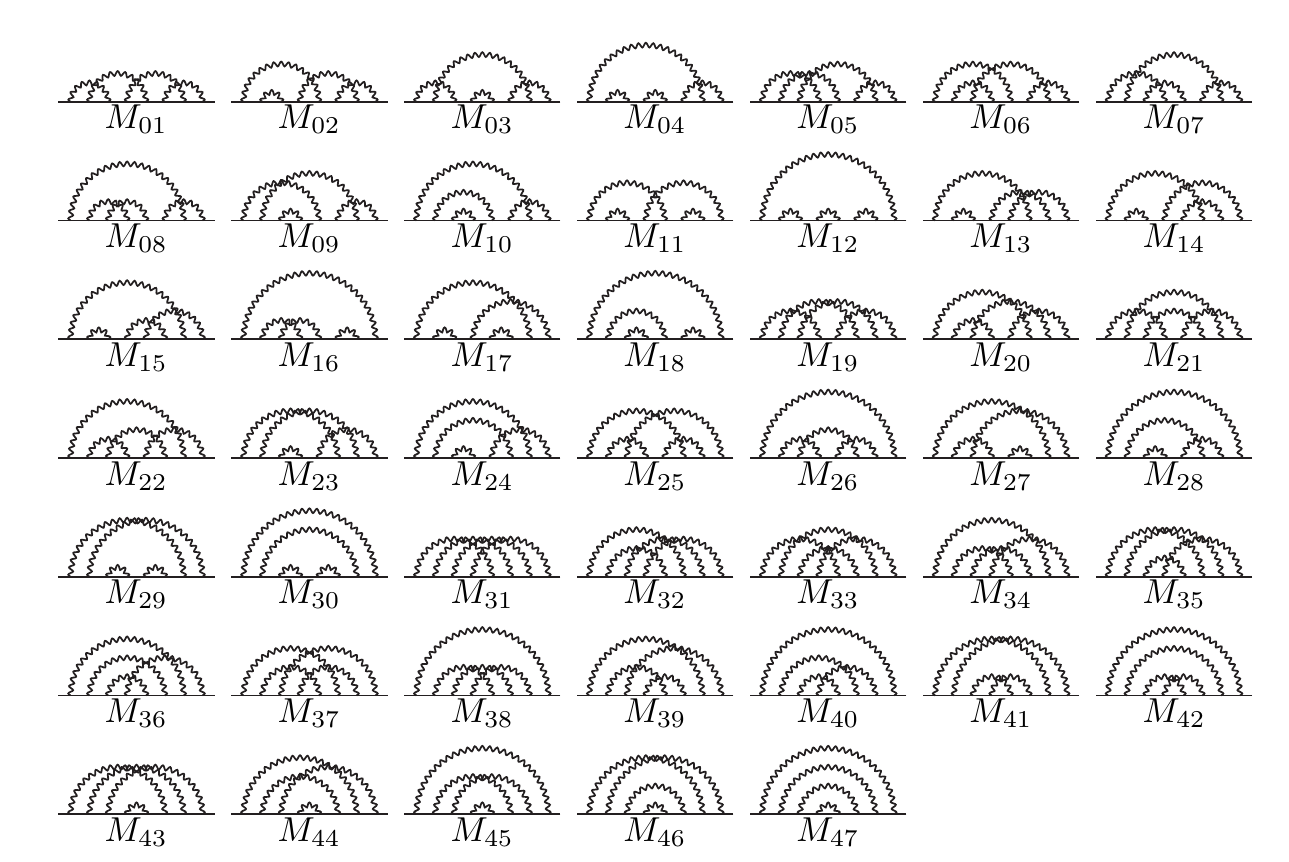}
\caption{Families of 4-loop Feynman graphs without lepton loops for
AMM, the numeration is taken from ~\cite{kinoshita_8_revis}.}
\label{fig_4loop}
\end{figure}

\begin{figure}[H]
\includegraphics{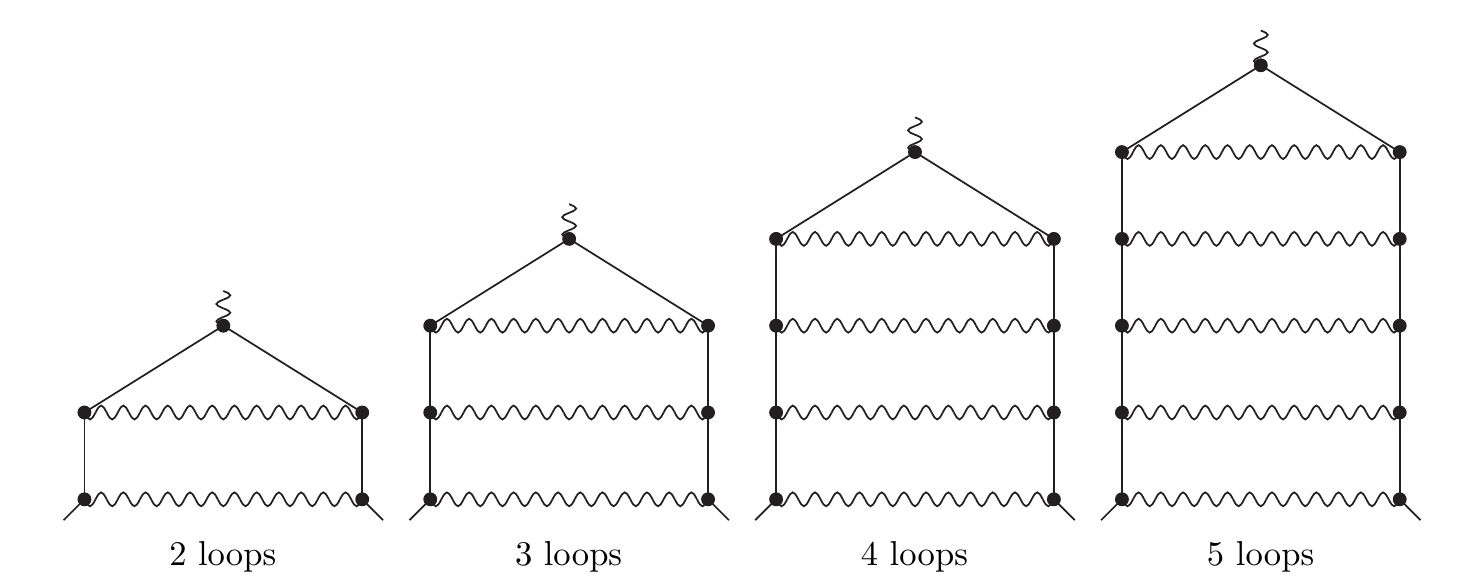}
\caption{Ladder Feynman graphs.} \label{fig_ladder}
\end{figure}

\begin{figure}[H]
\includegraphics{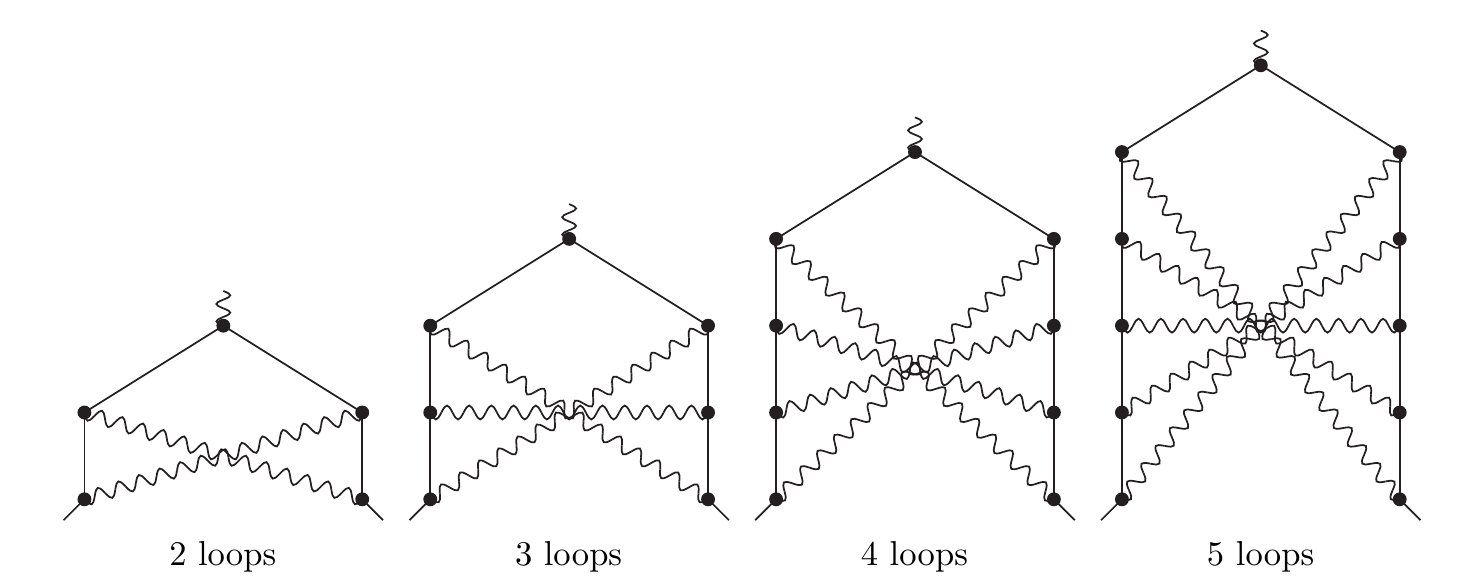}
\caption{Fully crossed ladder Feynman graphs.} \label{fig_crosses}
\end{figure}

\subsection{Comparison with other methods with respect to Monte
Carlo convergence speed}\label{subsec_speed}

Table \ref{table_mc} contains the comparison of the presented method
with other ones with respect to Monte Carlo convergence speed. We
suppose that $\sigma \sim C/\sqrt{N}$, where $N$ is the number of
Monte Carlo samples. Using the value of $C$ we can estimate the
convergence speed. The table shows that this method has an advantage
over the others. However, we must take into account the following.
\begin{itemize}
\item The calculations ~\cite{carrollyao,carroll},
~\cite{kinoshita_6_prec}, ~\cite{kinoshita_8_wrong} are not recent.
Recent calculations can have improvements in Monte Carlo integration
(however, the information about the number of Monte Carlo samples
for $A_1^{(8)}[\text{no lepton loops}]$ was not provided).
\item The information in ~\cite{carrollyao,carroll},
~\cite{kinoshita_6_prec}, ~\cite{kinoshita_8_wrong} about the number
of samples is very rough.
\item The integrands in this calculation and in ~\cite{carrollyao,carroll},
~\cite{kinoshita_6_prec}, ~\cite{kinoshita_8_wrong} have a different
nature due to the difference in subtraction procedures and in the
ways of extracting AMM. It is possible, theoretically, that a slow
convergence with respect to number of samples can be compensated by
a fast evaluation of integrands.
\item In table \ref{table_mc} we don't take into account the number
of samples that were evaluated for residual renormalization in
~\cite{carrollyao,carroll}, ~\cite{kinoshita_6_prec},
~\cite{kinoshita_8_wrong}. The presented method doesn't need
residual renormalizations.
\item We can have an error underestimation due to a small number of
samples. Also, $\sigma\cdot\sqrt{N_{\text{call}}}$ can increase with
the rise of $N_{\text{call}}$.
\item The error in ~\cite{carrollyao,carroll} was underestimated by
about $2.7$ times. The reason is unknown.
\item The calculation ~\cite{kinoshita_8_wrong} was wrong due to an
algebraic error ~\cite{kinoshita_8_revis}.
\end{itemize}

\small \ctable[pos=H,label=table_mc,caption={Comparison of this
method of calculating $A_1^{(2n)}{[\text{no lepton loops}]}$ with
the others with respect to Monte Carlo convergence
speed.}]{ccccc}{}{ \hline \hline Calculation & Val. & $\sigma$ &
$N_{\text{call}}$ & $\sigma\cdot\sqrt{N_{\text{call}}}$ \\ \hline
$n=3$, this calculation, 8 integrands & $0.9019$ & $0.0055$ &
$43\cdot 10^7$ & $119.2$ \\ $n=3$, ~\cite{carrollyao,carroll},
8 integrands, RIWIAD & $0.74$ & $0.06$ & $16\cdot 10^6$ & $240$ \\
 $n=3$, ~\cite{kinoshita_6_prec}, 8 integrands, VEGAS & $0.904882$
& $0.000347$ & $3\cdot 10^{12}$ & $601$ \\ $n=4$, this calculation,
47 integrands & $-2.34$ & $0.17$ & $10^9$ & $5375.9$
\\ $n=4$, ~\cite{kinoshita_8_wrong}, 47 integrands, VEGAS &
$-1.99306$ & $0.00343$ & $87\cdot 10^{12}$ & $31992.9$ \\ \hline
\hline } \normalsize

\section{CONCLUSION}

The method for numerical evaluation of $A^{(2n)}_1[\text{no lepton
loops}]$ was developed. The method is based on the subtraction
procedure from ~\cite{volkov_2015} and on the new importance
sampling Monte Carlo algorithm. The method has been checked
numerically for $n=2,3,4$ on personal computers, the results are in
good agreement with the known ones. Also, the contributions of some
individual 5-loop graphs were computed. The contributions of the
ladder graphs are in good agreement with the known analytical ones
($1.65\sigma$ limits). The obtained contributions of the 4-loop and
5-loop fully crossed ladder graphs are new and can be compared in
the future. The method was compared with the other ones with respect
to Monte Carlo convergence speed. The new method gives $\sigma$
about 4 times less for $n=3$ and about 6 times less for $n=4$ when
the number of samples is fixed. This comparison is not quite correct
due to different reasons. However, this shows that the method can be
used for precise evaluation of $A^{(2n)}_1[\text{no lepton loops}]$
for $n=4$ with the help of supercomputers. The question about
effectiveness of the method for $n=5$ is still open. Also, the
following problems remain open:
\begin{itemize}
\item to prove mathematically (or disprove) that the developed subtraction
procedure leads to a finite Feynman-parametric integral for all
Feynman graphs for any $n$;
\item to prove mathematically that the given probability density
function leads to a finite variance $V(f,g)$;
\item to develop a method of obtaining $\ffdeg(s)$ for Feynman
graphs containing lepton loops\footnote{It seems that I-closures are
not suitable for graphs with lepton loops.}.
\end{itemize}

\section*{ACKNOWLEDGEMENTS}

The author thanks A. L. Kataev for fruitful discussion and helpful
recommendations, as well as A. B. Arbuzov for his help in
organizational issues and in preparation of this text.

\end{document}